\documentclass[11pt]{article}

\usepackage[margin=1in]{geometry}
\usepackage{amsmath,amssymb,amsthm,mathtools}
\usepackage{enumitem}
\usepackage{hyperref}
\usepackage{xcolor}

\hypersetup{
  colorlinks=true,
  linkcolor=blue,
  citecolor=blue,
  urlcolor=blue
}

\newtheorem{theorem}{Theorem}
\newtheorem{lemma}{Lemma}

\newtheorem{remark}{Remark}
\newtheorem{definition}{Definition}

\newcommand{\OPT}{\operatorname{OPT}}

\newcommand{\R}{\mathbb{R}}

\title{A Polynomial-Time $O(\sqrt n)$-Approximation for Undirected Three-Terminal Reachability-Preserving Minimum Edge Cut}

\author{Qi Duan, Carnegie Mellon University}

\date{}

\begin{document}

\maketitle

\begin{abstract}
We study the undirected three-terminal reachability-preserving minimum edge cut problem.
The input is an undirected graph $G=(V,E)$ with nonnegative edge costs, two protected terminals $s_1,s_2$, and a target terminal $t$.
The goal is to remove a minimum-cost edge set so that $t$ is disconnected from the protected terminals while $s_1$ and $s_2$ remain connected.
This problem captures a basic tension between separation and connectivity preservation.
Prior work on connectivity-preserving cuts established polynomial-time solvability for some special cases, such as planar edge-cut instances, and strong hardness for node-cut variants, but a general-graph approximation guarantee for the undirected three-terminal edge-cut version does not appear to have been known.

We give a polynomial-time $O(\sqrt n)$-approximation algorithm, where $n=|V|$.
The main idea is to reformulate the problem as minimizing a rooted terminal cut function over all $s_1$--$s_2$ paths.
For $R=V\setminus\{t\}$ and $A\subseteq R$, define
\[
f(A)=\lambda(t,A)=\min_{U\subseteq R,\ A\subseteq U} c(\delta(U)),
\]
the minimum cost of separating $t$ from every vertex in $A$.
We prove that the RPMEC optimum is
\[
\OPT=\min_{P:s_1\leadsto s_2,\ t\notin V(P)} f(V(P)).
\]
Although $f$ is a monotone submodular function, it has additional rooted cut structure.
We exploit this structure by computing a proportional-fair point in the terminal cut polymatroid
\[
P_f=\{y\ge 0:y(U)\le c(\delta(U))\ \forall U\subseteq R\}.
\]
If $y^\star$ is this point and $\lambda_v=f(\{v\})$, then the weights
\[
a_v=\lambda_v y^\star_v
\]
satisfy
\[
\sqrt{\sum_{v\in A}a_v}
\le
f(A)
\le
\sqrt{|R|}\sqrt{\sum_{v\in A}a_v}
\qquad
\forall A\subseteq R.
\]
Thus $f$ admits a constructive root-linear approximation with factor $\sqrt{|V|-1}$.
The algorithm then finds a shortest $s_1$--$s_2$ path under vertex lengths $a_v$, separates $t$ from that path by a minimum cut, and outputs the connected component containing the path.
The approximation ratio is $\sqrt{|V|-1}$, or $\sqrt{(1+\varepsilon)(|V|-1)}$ with approximate convex optimization.
\end{abstract}

\section{Introduction}
\label{sec:introduction}

Minimum cut is one of the most fundamental problems in graph algorithms and combinatorial optimization.
In its classical form, the goal is to remove a minimum-cost set of edges separating two terminals.
Many applications, however, require more than separation.
A cut may be useful only if some other connectivity relation is preserved.
For example, in communication networks, security-policy analysis, biological interaction networks, and infrastructure protection, one may want to isolate a harmful or undesired target while maintaining communication among protected nodes.
This motivates connectivity-preserving cut problems.

This paper studies the undirected three-terminal reachability-preserving minimum edge cut problem, abbreviated as three-terminal RPMEC.
The input consists of an undirected graph
\[
G=(V,E),
\]
a nonnegative edge-cost function
\[
c:E\to \R_{\ge 0},
\]
two protected terminals $s_1,s_2\in V$, and a target terminal $t\in V$.
The goal is to remove a minimum-cost edge set $F\subseteq E$ such that, in the residual graph $G-F$, the protected terminals $s_1$ and $s_2$ remain connected, while $t$ is disconnected from them.

Equivalently, one may search for a connected source-side region
\[
S\subseteq V\setminus\{t\}
\]
such that
\[
s_1,s_2\in S,
\]
and minimize the boundary cost
\[
c(\delta(S))=\sum_{e\in\delta(S)}c_e.
\]
Here $\delta(S)$ denotes the set of edges with exactly one endpoint in $S$.

The problem has a simple three-terminal description, but it is not a standard minimum cut problem.
An ordinary minimum cut separating $t$ from $\{s_1,s_2\}$ may disconnect $s_1$ from $s_2$.
Conversely, requiring $s_1$ and $s_2$ to remain connected introduces a global connectivity constraint on the source side.
This distinguishes RPMEC from classical minimum cut, multiway cut, and Steiner-type connectivity problems.

The closest classical formulation is the connectivity-preserving minimum cut problem studied by Duan and Xu~\cite{DuanXu2014CPMC}.
They considered both node-cut and edge-cut variants.
Their results show strong inapproximability for node-cut variants and polynomial-time solvability for some planar edge-cut cases.
The general-graph edge-cut case was much less understood.
To the best of our knowledge, no polynomial-time approximation algorithm with a provable performance guarantee was previously known for the general undirected three-terminal edge-cut version.

\subsection{Our contribution}

The main result of this paper is the following.

\begin{theorem}[Main result]
\label{thm:main-intro}
Undirected three-terminal RPMEC admits a polynomial-time $O(\sqrt n)$-approximation, where $n=|V|$.
More precisely, there is a polynomial-time algorithm returning a feasible solution of cost at most
\[
\sqrt{|V|-1}\cdot \OPT.
\]
With approximate numerical optimization, the ratio becomes
\[
\sqrt{(1+\varepsilon)(|V|-1)}
\]
for any fixed $\varepsilon>0$.
\end{theorem}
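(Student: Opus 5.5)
The proof will follow the outline in the abstract and has three parts: a path reformulation of $\OPT$, a root-linear sandwich for $f$, and the rounding step.

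\textbf{Step 1 (reformulation).} Let $R=V\setminus\{t\}$ and $f(A)=\lambda(t,A)$. I will show
\[
\OPT=\min_{P:s_1\leadsto s_2,\ t\notin V(P)} f(V(P)).
\]
For ``$\ge$'': take an optimal region $S$. It is connected, contains $s_1,s_2$, and avoids $t$. So it contains an $s_1$--$s_2$ path $P$, and $U=S$ is a candidate in the definition of $f(V(P))$. Hence $f(V(P))\le c(\delta(S))$.

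For ``$\le$'': take any such path $P$ and a minimizer $U\supseteq V(P)$ of $f(V(P))$. Let $S$ be the connected component of $G[U]$ containing $P$. Every edge of $\delta(S)$ lies in $\delta(U)$: an edge from $S$ to another vertex of $U$ would put that vertex in $S$. With nonnegative costs this gives $c(\delta(S))\le c(\delta(U))=f(V(P))$, and $S$ is feasible.

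This argument is also the rounding step. Given any path $P$, a min cut separating $t$ from $V(P)$ (contract $V(P)$ and run max-flow) followed by taking the component containing $P$ yields a feasible solution of cost at most $f(V(P))$.

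\textbf{Step 2 (sandwich).} I will establish
\[
\sqrt{a(A)}\le f(A)\le \sqrt{|R|}\,\sqrt{a(A)}
\]
with $a_v=\lambda_v y^\star_v$, where $y^\star$ maximizes $\sum_{v\in R}\log y_v$ over $P_f$. Note $f$ is monotone submodular and $P_f$ is its polymatroid, because $\max_{y\in P_f}y(A)=f(A)$.

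\emph{Lower bound.} Since $y^\star_v\le\lambda_v$ and $y^\star(A)\le f(A)$, we get $\sum_{v\in A}\lambda_v y^\star_v\le \max_{v\in A}\lambda_v\cdot y^\star(A)\le f(A)^2$, using monotonicity $\lambda_v\le f(A)$.

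\emph{Upper bound.} I use the KKT/optimality condition of proportional fairness: for every $z\in P_f$,
\[
\sum_v z_v/y^\star_v\le |R|.
\]
For $A\subseteq R$, pick by the greedy algorithm a point $z\in P_f$ supported on $A$ with $z(A)=f(A)$. Cauchy--Schwarz then gives
\[
f(A)^2=\Bigl(\sum_{v\in A}z_v\Bigr)^2\le\Bigl(\sum_{v\in A} z_v/y^\star_v\Bigr)\Bigl(\sum_{v\in A}z_vy^\star_v\Bigr)\le |R|\sum_{v\in A}z_v y^\star_v.
\]
Finally $z_v\le f(\{v\})=\lambda_v$, so the last sum is at most $a(A)$. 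This Cauchy--Schwarz with the greedy vertex is the main technical point.

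\textbf{Step 3 (algorithm and ratio).} Compute $y^\star$. This is a concave maximization over $P_f$, and separation over $P_f$ is submodular minimization, or more concretely a min-cut computation, since $y(U)\le c(\delta(U))$ is checked by cut minimization. The ellipsoid method therefore gives a $(1+\varepsilon)$-approximate optimum in polynomial time.

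Next, find a shortest $s_1$--$s_2$ path $P$ in $G-t$ with vertex lengths $a_v$, and round it as in Step 1. Let $P^\ast$ be an optimal path. Then
\[
\text{cost}\le f(V(P))\le\sqrt{|R|}\sqrt{a(V(P))}\le\sqrt{|R|}\sqrt{a(V(P^\ast))}\le\sqrt{|R|}\,f(V(P^\ast))=\sqrt{|V|-1}\,\OPT.
\]

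\textbf{Main obstacle.} I expect the delicate part to be the approximate version. With $y$ only approximately proportional-fair, I need the optimality inequality to weaken to $\sum z_v/y_v\le(1+\varepsilon)|R|$. This follows if $y$ is chosen to satisfy an approximate KKT condition, which can be arranged by a multiplicative-weights or ellipsoid computation with a small scaling of $y$. Substituting $(1+\varepsilon)|R|$ in the Cauchy--Schwarz step gives the $\sqrt{(1+\varepsilon)(|V|-1)}$ ratio.
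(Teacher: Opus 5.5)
Your proposal is correct and follows the paper's proof essentially step for step. It uses the same path--mincut identity, which also serves as the rounding step, the same proportional-fair weights $a_v=\lambda_v y^\star_v$, and the same greedy-vertex-plus-Cauchy--Schwarz upper bound driven by the first-order condition $\sum_{v} z_v/y^\star_v\le |R|$. Your lower bound via $\max_{v\in A}\lambda_v\le f(A)$ and your pairing $\sqrt{z_v/y^\star_v}\cdot\sqrt{z_v y^\star_v}$ are only cosmetic variants of the paper's.

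The paper additionally proves that $f$ is submodular by uncrossing, which your greedy step relies on but you only assert. It also has a remark on the degenerate case $\lambda_v=0$, which you do not address: there no strictly positive point of $P_f$ exists, and one restricts the construction to vertices of positive $t$-connectivity.
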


The proof is based on a terminal-function viewpoint.
Let
\[
R=V\setminus\{t\}.
\]
For any subset $A\subseteq R$, define
\[
f(A)=\lambda(t,A)
=
\min_{\substack{U\subseteq R\\A\subseteq U}} c(\delta(U)).
\]
Thus $f(A)$ is the minimum cost of an edge cut that separates $t$ from every vertex of $A$.
We call $f$ the rooted terminal cut function.

The first key observation is an exact path--mincut identity:
\[
\OPT
=
\min_{P:s_1\leadsto s_2,\ t\notin V(P)} f(V(P)),
\]
where the minimum is over all $s_1$--$s_2$ paths in $G-\{t\}$.
Thus RPMEC can be viewed as choosing an $s_1$--$s_2$ path whose vertex set has minimum rooted terminal cut value.

The second key observation is that $f$ admits a simple root-linear approximation.
Let
\[
P_f=\{y\in\R_{\ge 0}^{R}:y(A)\le f(A)\ \forall A\subseteq R\}
\]
be the polymatroid associated with $f$.
For rooted terminal cut functions, this polymatroid has the graph-cut description
\[
P_f
=
\{y\in\R_{\ge 0}^{R}:y(U)\le c(\delta(U))\ \forall U\subseteq R\}.
\]
We compute the proportional-fair point
\[
y^\star\in
\arg\max_{y\in P_f,\ y_v>0\ \forall v}
\sum_{v\in R}\log y_v.
\]
Let
\[
\lambda_v=f(\{v\})=\lambda(t,v)
\]
be the minimum $t$--$v$ cut value, and define
\[
a_v=\lambda_v y^\star_v.
\]
We prove that, for every $A\subseteq R$,
\[
\sqrt{\sum_{v\in A}a_v}
\le
f(A)
\le
\sqrt{|R|}
\sqrt{\sum_{v\in A}a_v}.
\]
The proof uses only cut feasibility, proportional fairness, and Cauchy's inequality.

The algorithm then finds a shortest $s_1$--$s_2$ path using vertex lengths $a_v$, computes a minimum cut separating $t$ from that path, and outputs the connected component containing the path.

\subsection{Significance}

The result is significant for two reasons.
First, it provides a general-graph approximation guarantee for a connectivity-preserving edge-cut problem for which previous results mainly addressed exact solvability under additional structure, such as planarity.
Second, the method is conceptually different from standard cut approximations.
Instead of reducing RPMEC to multiway cut or directly approximating the connected source side, we approximate the rooted terminal cut function by a vertex-length geometry and then solve a shortest-path problem.

This terminal-function approach may be useful beyond the three-terminal RPMEC setting.
Many connectivity-preserving cut problems contain a connected witness, such as a path, tree, or small subgraph, together with a boundary-cut objective.
If the boundary objective can be represented as a rooted terminal cut function or a related graph-induced submodular function, proportional-fair root-linear approximations may provide new approximation guarantees.

\section{Related Work}
\label{sec:related-work}

\paragraph{Classical cut problems.}
Minimum cut and maximum flow are foundational problems in combinatorial optimization.
In the classical $s$--$t$ minimum cut problem, one seeks a minimum-cost edge set separating two terminals.
By the max-flow min-cut theorem, this problem is polynomial-time solvable and has many efficient algorithms.
Standard references on network flows and cut algorithms include Ahuja, Magnanti, and Orlin~\cite{Ahuja1993NetworkFlows} and Schrijver~\cite{Schrijver2003CombinatorialOptimization}.
Three-terminal RPMEC differs from classical minimum cut because the cut must preserve the connectivity between the two protected terminals.

\paragraph{Connectivity-preserving cuts.}
Connectivity-preserving cut problems impose additional constraints on the residual graph after deletion.
Duan and Xu~\cite{DuanXu2014CPMC} introduced and studied the connectivity-preserving minimum cut problem, including both node-cut and edge-cut versions.
They showed strong hardness and inapproximability results for node-cut variants, while also giving polynomial-time algorithms for important planar edge-cut cases.
The present paper focuses on the undirected three-terminal edge-cut case in general graphs and gives an $O(\sqrt n)$ approximation guarantee.

\paragraph{Multiway cut and multiterminal separation.}
Multiway cut asks for a minimum-cost edge set separating every pair of terminals.
It is NP-hard for three or more terminals~\cite{Dahlhaus1994MultiterminalCuts} and has a rich approximation literature, including classical approximation algorithms by Garg, Vazirani, and Yannakakis~\cite{Garg1996MultiwayCut} and by C{\u a}linescu, Karloff, and Rabani~\cite{Calinescu1998MultiwayCut}.
Although RPMEC also involves three terminals, it is structurally different: $s_1$ and $s_2$ must remain connected, while $t$ must be separated from them.
Thus RPMEC is not a standard multiway cut problem.

\paragraph{Connected source-side cuts.}
The source-side formulation of RPMEC asks for a connected set $S$ containing $s_1$ and $s_2$, excluding $t$, and minimizing $c(\delta(S))$.
This places the problem near connected cut and connected subgraph problems.
However, the objective is a boundary cost, not the cost of constructing the connected subgraph.
Our path--mincut identity separates the connectivity requirement from the boundary objective: connectivity is represented by an $s_1$--$s_2$ path, and the boundary objective is represented by the rooted terminal cut function evaluated on the vertices of that path.

\paragraph{Submodular functions and polymatroids.}
Submodular functions model diminishing returns and are central in combinatorial optimization.
A set function $f:2^R\to\R$ is submodular if
\[
f(A)+f(B)\ge f(A\cup B)+f(A\cap B)
\]
for all $A,B\subseteq R$.
Monotone submodular functions give rise to polymatroids, over which linear optimization can be solved by greedy algorithms.
Standard references include Edmonds~\cite{Edmonds1970SubmodularFunctions}, Lovasz~\cite{Lovasz1983Submodular}, Fujishige~\cite{Fujishige2005Submodular}, and Schrijver~\cite{Schrijver2003CombinatorialOptimization}.
The function $f(A)=\lambda(t,A)$ used in this paper is monotone and submodular, but it also has a special graph-cut representation.

\paragraph{Everywhere approximation of submodular functions.}
Goemans, Harvey, Iwata, and Mirrokni~\cite{Goemans2009ApproximatingSubmodularEverywhere} studied the problem of approximating a submodular function everywhere by a simpler function.
They showed that general nonnegative monotone submodular functions admit root-linear approximations with an $O(\sqrt n\log n)$ factor, while matroid rank functions admit $\Theta(\sqrt n)$-type approximations.
Their framework motivates our use of a root-linear surrogate
\[
g(A)=\sqrt{\sum_{v\in A}a_v}.
\]
However, our proof does not use their general theorem as a black box.
Instead, it exploits the graph-cut structure of rooted terminal cut functions and constructs the weights $a_v$ from a proportional-fair point in the terminal cut polymatroid.
This removes the generic logarithmic factor for this special class.

\section{Preliminaries}
\label{sec:preliminaries}

Let $G=(V,E)$ be an undirected graph.
For a vertex subset $S\subseteq V$, let
\[
\delta(S)=\{uv\in E:u\in S,\ v\in V\setminus S\}
\]
be the cut induced by $S$.
For an edge-cost function $c:E\to\R_{\ge 0}$, define
\[
c(\delta(S))=\sum_{e\in\delta(S)}c_e.
\]

For vertices or vertex sets $X,Y\subseteq V$, we say that $X$ is separated from $Y$ by a cut $\delta(S)$ if $X\subseteq S$ and $Y\cap S=\varnothing$, or vice versa.
Since the graph is undirected, the side of the cut is not oriented.

\begin{definition}[Three-terminal RPMEC]
\label{def:rpmec}
Given an undirected graph $G=(V,E)$, a nonnegative edge-cost function $c:E\to\R_{\ge 0}$, protected terminals $s_1,s_2\in V$, and a target terminal $t\in V$, the three-terminal reachability-preserving minimum edge cut problem asks for a minimum-cost edge set $F\subseteq E$ such that in $G-F$:
\begin{enumerate}[label=(\roman*)]
\item $s_1$ and $s_2$ are connected; and
\item $t$ is disconnected from $s_1$ and $s_2$.
\end{enumerate}
\end{definition}

\begin{definition}[Source-side formulation]
\label{def:source-side}
A set $S\subseteq V$ is a feasible RPMEC source side if
\[
s_1,s_2\in S,
\qquad
 t\notin S,
\]
and the induced subgraph $G[S]$ is connected.
Its cost is
\[
c(\delta(S)).
\]
\end{definition}

Every feasible deletion set $F$ induces a feasible source side by taking the connected component of $G-F$ containing $s_1$ and $s_2$.
Conversely, deleting $\delta(S)$ for any feasible source side $S$ gives a feasible RPMEC solution.
Therefore,
\[
\OPT=
\min\{c(\delta(S)):S\subseteq V\setminus\{t\},\ s_1,s_2\in S,\ G[S]\text{ connected}\}.
\]

\section{An $O(\sqrt n)$-Approximation}
\label{sec:sqrtn-rpmec}

This section proves the main result.
The proof has three steps.
First, we reformulate RPMEC as a path optimization problem with a rooted terminal cut objective.
Second, we approximate this terminal cut objective by a root-linear function.
Third, we solve the resulting shortest-path problem and convert the path back to an RPMEC solution.

\subsection{The rooted terminal cut function}

Let
\[
R=V\setminus\{t\},
\qquad
r=|R|=|V|-1.
\]
For every subset $A\subseteq R$, define
\[
f(A)=\lambda(t,A)
=
\min_{\substack{U\subseteq R\\A\subseteq U}} c(\delta(U)).
\]
The value $f(A)$ is the minimum cost of separating $t$ from every vertex of $A$.
If $A=\varnothing$, then $f(\varnothing)=0$ because the empty set imposes no separation requirement.

\begin{lemma}[Path--mincut identity]
\label{lem:path-mincut}
The optimum value of three-terminal RPMEC satisfies
\[
\OPT
=
\min_{P:s_1\leadsto s_2,\ t\notin V(P)} f(V(P)),
\]
where the minimum is over all $s_1$--$s_2$ paths in $G-\{t\}$.
\end{lemma}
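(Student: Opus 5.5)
I will prove the identity by establishing the two inequalities separately, using the source-side characterization $\OPT=\min\{c(\delta(S)) : S\subseteq R,\ s_1,s_2\in S,\ G[S]\text{ connected}\}$ from the Preliminaries.

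\emph{($\ge$)} Let $S$ be an optimal feasible source side. Because $G[S]$ is connected and contains $s_1,s_2$, it contains an $s_1$--$s_2$ path $P$ with $V(P)\subseteq S\subseteq R$, so $t\notin V(P)$. The set $U=S$ is then admissible in the definition of $f(V(P))$, since $V(P)\subseteq U\subseteq R$. Hence $f(V(P))\le c(\delta(S))=\OPT$. The right-hand side is therefore at most $\OPT$. Note that $s_1\ne t\ne s_2$ is implicit whenever the instance is feasible.

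\emph{($\le$)} Fix any $s_1$--$s_2$ path $P$ in $G-\{t\}$, and let $U\subseteq R$ with $V(P)\subseteq U$ attain $f(V(P))=c(\delta(U))$. The set $U$ need not induce a connected subgraph. Let $S$ be the vertex set of the connected component of $G[U]$ containing $P$; this is well defined because $P$ is connected inside $G[U]$. Then $s_1,s_2\in S$, $t\notin S$, and $G[S]$ is connected, so $S$ is a feasible source side. The main step is to show $c(\delta(S))\le c(\delta(U))$.

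Take an edge $uv\in\delta(S)$ with $u\in S$ and $v\notin S$. If $v$ were in $U$, then $v$ would be adjacent in $G[U]$ to $u\in S$, and so $v$ would lie in the same component, contradicting $v\notin S$. Thus $v\notin U$, and $uv\in\delta(U)$. This gives $\delta(S)\subseteq\delta(U)$, and since costs are nonnegative, $c(\delta(S))\le c(\delta(U))=f(V(P))$. Therefore $\OPT\le f(V(P))$ for every such $P$, which proves ($\le$).

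I expect no real obstacle beyond this component-restriction step. The only care needed is with degenerate cases. If no such path exists, both sides are $+\infty$ (infeasible). The minimum in $f$ exists because $R$ is finite and $V(P)\subseteq R$ guarantees at least one admissible $U$.
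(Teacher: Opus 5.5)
Your proof is correct and follows the paper's own argument essentially step for step. For $(\ge)$ you take an $s_1$--$s_2$ path inside an optimal connected source side $S$, so $U=S$ is admissible for $f(V(P))$; for $(\le)$ you restrict a minimizing $U$ to the component of $G[U]$ containing $P$ and use $\delta(S)\subseteq\delta(U)$, which you justify edge by edge a little more explicitly than the paper does, along with your remarks on the degenerate cases.
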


\begin{proof}
Let $S^\star$ be an optimal feasible source side.
Since $G[S^\star]$ is connected and contains $s_1$ and $s_2$, there exists an $s_1$--$s_2$ path $P^\star$ with
\[
V(P^\star)\subseteq S^\star.
\]
The cut $\delta(S^\star)$ separates $t$ from every vertex of $V(P^\star)$.
Therefore,
\[
f(V(P^\star))\le c(\delta(S^\star))=\OPT.
\]
Hence
\[
\min_{P:s_1\leadsto s_2} f(V(P))\le \OPT.
\]

Conversely, let $P$ be any $s_1$--$s_2$ path in $G-\{t\}$.
Let $U\subseteq R$ be an optimal set in the definition of $f(V(P))$, so
\[
V(P)\subseteq U
\]
and
\[
c(\delta(U))=f(V(P)).
\]
Consider the connected component $C$ of the induced subgraph $G[U]$ that contains the path $P$.
Then $C$ contains $s_1$ and $s_2$, excludes $t$, and is connected.
Moreover, no edge of $G[U]$ connects $C$ to $U\setminus C$, because $C$ is a connected component of $G[U]$.
Therefore every edge leaving $C$ also leaves $U$, and hence
\[
\delta(C)\subseteq \delta(U).
\]
It follows that
\[
c(\delta(C))\le c(\delta(U))=f(V(P)).
\]
Thus $C$ is a feasible RPMEC source side of cost at most $f(V(P))$.
Taking the minimum over all paths $P$ gives
\[
\OPT\le \min_{P:s_1\leadsto s_2} f(V(P)).
\]
Combining the two inequalities proves the identity.
\end{proof}

\subsection{Submodularity and the terminal cut polymatroid}

We next record the structural properties of $f$.

\begin{lemma}
\label{lem:f-submodular}
The function $f:2^R\to\R_{\ge 0}$ is normalized, monotone, and submodular.
\end{lemma}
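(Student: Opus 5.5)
Normalization and monotonicity come straight from the definition. For $A=\varnothing$ the choice $U=\varnothing$ is feasible and gives $c(\delta(\varnothing))=0$; since every cut value is nonnegative, $f(\varnothing)=0$. For $A\subseteq B\subseteq R$, every $U$ admissible for $B$ (that is, $B\subseteq U\subseteq R$) is also admissible for $A$. So the minimum defining $f(A)$ ranges over a superset of the sets used for $f(B)$, which gives $f(A)\le f(B)$. Nonnegativity holds because $c\ge 0$.

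For submodularity I will use the standard uncrossing argument based on the submodularity of the graph cut function $U\mapsto c(\delta(U))$. First I verify the cut inequality
\[
c(\delta(X))+c(\delta(Y))\ge c(\delta(X\cup Y))+c(\delta(X\cap Y))
\]
for all $X,Y\subseteq V$. This is done edge by edge: for each edge $uv$, check over the possible memberships of $u,v$ in $X$ and $Y$ that its contribution to the left-hand side is at least its contribution to the right-hand side. Every case gives equality except one endpoint in $X\setminus Y$ and the other in $Y\setminus X$, which contributes $2c_{uv}$ on the left and $0$ on the right. Since $c\ge 0$, the inequality follows.

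Now fix $A,B\subseteq R$. Let $U_A$ be a minimizer for $f(A)$ and $U_B$ a minimizer for $f(B)$, so that $A\subseteq U_A\subseteq R$ and $B\subseteq U_B\subseteq R$. Then $U_A\cup U_B\subseteq R$ contains $A\cup B$, and $U_A\cap U_B\subseteq R$ contains $A\cap B$, so both are admissible for the corresponding minimizations. Hence
\[
f(A)+f(B)=c(\delta(U_A))+c(\delta(U_B))\ge c(\delta(U_A\cup U_B))+c(\delta(U_A\cap U_B))\ge f(A\cup B)+f(A\cap B).
\]

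The only point needing care is that both uncrossed sets stay inside $R$, i.e.\ avoid $t$, and still contain the required terminal sets. This holds because $R$ is closed under union and intersection and the containments are preserved. I do not expect any real obstacle beyond the case check in the cut-submodularity inequality. The minima exist because there are finitely many candidate sets $U$.
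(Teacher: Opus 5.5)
Your proof is correct and follows the same approach as the paper: normalization via $U=\varnothing$, monotonicity because every set admissible for $B$ is admissible for $A$, and submodularity by uncrossing the minimizers $U_A,U_B$ using submodularity of the cut function. The one difference is that you verify cut submodularity edge by edge, whereas the paper simply cites it as a known fact.
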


\begin{proof}
The function is normalized because $f(\varnothing)=0$.
It is monotone because if $A\subseteq B$, then every cut separating $t$ from $B$ also separates $t$ from $A$.
Thus
\[
f(A)\le f(B).
\]

To prove submodularity, let $A,B\subseteq R$.
Let $U_A$ and $U_B$ be optimal sets in the definitions of $f(A)$ and $f(B)$, respectively.
Thus
\[
A\subseteq U_A,
\qquad
B\subseteq U_B,
\]
and
\[
f(A)=c(\delta(U_A)),
\qquad
f(B)=c(\delta(U_B)).
\]
The set $U_A\cup U_B$ contains $A\cup B$, so it is feasible for $f(A\cup B)$.
The set $U_A\cap U_B$ contains $A\cap B$, so it is feasible for $f(A\cap B)$.
Since undirected cut capacity is submodular,
\[
c(\delta(U_A))+c(\delta(U_B))
\ge
c(\delta(U_A\cup U_B))+c(\delta(U_A\cap U_B)).
\]
Therefore,
\[
\begin{aligned}
f(A)+f(B)
&=c(\delta(U_A))+c(\delta(U_B))\\
&\ge c(\delta(U_A\cup U_B))+c(\delta(U_A\cap U_B))\\
&\ge f(A\cup B)+f(A\cap B).
\end{aligned}
\]
Thus $f$ is submodular.
\end{proof}

Define the polymatroid associated with $f$:
\[
P_f
=
\{y\in\R_{\ge 0}^{R}:y(A)\le f(A)\ \text{for every }A\subseteq R\}.
\]
For a vector $y\in\R^R$ and a set $A\subseteq R$, we use the notation
\[
y(A)=\sum_{v\in A}y_v.
\]

\begin{lemma}[Cut description of $P_f$]
\label{lem:cut-poly}
The polymatroid $P_f$ has the equivalent description
\[
P_f
=
\{y\in\R_{\ge 0}^{R}:y(U)\le c(\delta(U))\ \text{for every }U\subseteq R\}.
\]
\end{lemma}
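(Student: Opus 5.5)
The plan is to prove the two inclusions between $P_f$ and the set
\[
Q=\{y\in\R_{\ge 0}^{R}:y(U)\le c(\delta(U))\ \text{for every }U\subseteq R\}.
\]
Both directions follow directly from the definition of $f$ as a minimum over supersets $U\subseteq R$. The only additional ingredient is the nonnegativity of $y$, which gives monotonicity of $y(\cdot)$ under inclusion.

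First, we show $P_f\subseteq Q$. Take $y\in P_f$ and any $U\subseteq R$. The set $U$ itself is feasible in the minimization defining $f(U)$, since $U\subseteq U\subseteq R$. Hence $f(U)\le c(\delta(U))$. Since $y(U)\le f(U)$ by membership in $P_f$, we get $y(U)\le c(\delta(U))$. Therefore $y\in Q$.

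Second, we show $Q\subseteq P_f$. Take $y\in Q$ and any $A\subseteq R$. Let $U_A\subseteq R$ be a minimizer in the definition of $f(A)$, so that $A\subseteq U_A$ and $c(\delta(U_A))=f(A)$. Because $y\ge 0$ and $A\subseteq U_A$, we have
\[
y(A)\le y(U_A)\le c(\delta(U_A))=f(A),
\]
where the middle inequality is the constraint of $Q$ applied to $U_A$. Therefore $y\in P_f$.

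Neither step is a real obstacle. The one point to state carefully is that the minimum defining $f(A)$ is attained, which holds because $R$ is finite. This is also where nonnegativity is genuinely used: the constraints of $Q$ range over all $U\subseteq R$, but the passage from $U_A$ back to $A$ requires $y(A)\le y(U_A)$, which holds only because every coordinate of $y$ is nonnegative. If $A=\varnothing$, the constraint $y(\varnothing)=0\le f(\varnothing)=0$ is trivial, so that case needs no separate treatment.
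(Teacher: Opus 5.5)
Your proof is correct and is essentially the paper's argument. $P_f\subseteq Q$ holds because $U$ is itself feasible for $f(U)$, and $Q\subseteq P_f$ follows from nonnegativity of $y$ together with the constraint at a superset $U\supseteq A$; you fix a minimizer $U_A$ where the paper takes the minimum over all such $U$, which amounts to the same thing.
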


\begin{proof}
Suppose first that $y(U)\le c(\delta(U))$ for every $U\subseteq R$.
Let $A\subseteq R$.
For every $U\subseteq R$ with $A\subseteq U$,
\[
y(A)\le y(U)\le c(\delta(U)).
\]
Taking the minimum over all such $U$ gives
\[
y(A)\le f(A),
\]
so $y\in P_f$.

Conversely, suppose $y\in P_f$.
Then for every $U\subseteq R$,
\[
y(U)\le f(U).
\]
Since $U$ itself is feasible in the definition of $f(U)$,
\[
f(U)\le c(\delta(U)).
\]
Therefore
\[
y(U)\le c(\delta(U)).
\]
This proves the equivalence.
\end{proof}

\subsection{The proportional-fair point}

For each vertex $v\in R$, define its singleton terminal connectivity
\[
\lambda_v=f(\{v\})=\lambda(t,v).
\]
In the main proof we assume
\[
\lambda_v>0
\qquad
\text{for all }v\in R.
\]
Zero-connectivity cases are discussed in Remark~\ref{rem:zero-cases}.

Let $y^\star$ be a proportional-fair point of $P_f$, defined by
\[
y^\star
\in
\arg\max_{y\in P_f,\ y_v>0\ \forall v}
\sum_{v\in R}\log y_v.
\]
The logarithm is natural logarithm.
Since the objective is strictly concave and increasing in every coordinate, $y^\star$ is the balanced feasible vector that maximizes the sum of logarithmic utilities.

Now define
\[
a_v=\lambda_v y^\star_v
\qquad
\text{for every }v\in R.
\]
For $A\subseteq R$, define
\[
a(A)=\sum_{v\in A}a_v.
\]
The corresponding root-linear surrogate is
\[
g(A)=\sqrt{a(A)}.
\]

The next theorem is the core of the approximation algorithm.

\begin{theorem}[Root-linear approximation of rooted terminal cut functions]
\label{thm:terminal-root-linear}
For every subset $A\subseteq R$,
\[
\sqrt{a(A)}
\le
f(A)
\le
\sqrt r\,\sqrt{a(A)}.
\]
\end{theorem}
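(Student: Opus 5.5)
The plan is to prove the two inequalities separately. The lower bound uses only membership $y^\star\in P_f$ and monotonicity of $f$. The upper bound uses the first-order optimality condition of the proportional-fair program together with Cauchy's inequality.

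\emph{Lower bound.} Fix $A\subseteq R$; the case $A=\varnothing$ is trivial. For each $v\in A$, monotonicity (Lemma~\ref{lem:f-submodular}) gives $\lambda_v=f(\{v\})\le f(A)$. Since $y^\star\in P_f$, we also have $y^\star(A)\le f(A)$. Hence
\[
a(A)=\sum_{v\in A}\lambda_v y^\star_v\le f(A)\,y^\star(A)\le f(A)^2,
\]
which gives $\sqrt{a(A)}\le f(A)$.

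\emph{Optimality condition.} First I will check that $y^\star$ exists. Because every $\lambda_v>0$, the vector $\varepsilon\mathbf 1$ lies in $P_f$ for small $\varepsilon>0$, since each constraint $y(U)\le c(\delta(U))$ with $U\ne\varnothing$ has right-hand side $c(\delta(U))\ge f(U)\ge\lambda_v>0$ for any $v\in U$. Thus the feasible region with $y>0$ is nonempty. The set $P_f$ is compact, and the objective tends to $-\infty$ at the boundary, so a maximizer with $y^\star>0$ exists.

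Since $P_f$ is convex and $\phi(y)=\sum_v\log y_v$ is concave and differentiable at $y^\star$, the first-order condition holds: for every $z\in P_f$ and small $\theta>0$, the point $y^\star+\theta(z-y^\star)$ is feasible and positive. This yields $\nabla\phi(y^\star)\cdot(z-y^\star)\le 0$, that is,
\[
\sum_{v\in R}\frac{z_v}{y^\star_v}\le \sum_{v\in R}\frac{y^\star_v}{y^\star_v}=r .
\]
I expect this to be the main technical point. It needs the positivity/interior argument above so that differentiability and the feasibility of the segment are justified.

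\emph{Upper bound.} Fix $A$. By the greedy/polymatroid property of the normalized monotone submodular function $f$ (Lemma~\ref{lem:f-submodular}), there is $z\in P_f$ supported on $A$ with $z(A)=f(A)$. For instance, take the greedy vector along an ordering of $A$ and set $z_v=0$ outside $A$. Since $z\in P_f$, we have $z_v\le f(\{v\})=\lambda_v$ for every $v$. Now write $z_v=\sqrt{z_v/y^\star_v}\cdot\sqrt{z_v y^\star_v}$ and apply Cauchy's inequality:
\[
f(A)=\sum_{v\in A}z_v\le\Bigl(\sum_{v\in A}\frac{z_v}{y^\star_v}\Bigr)^{1/2}\Bigl(\sum_{v\in A}z_v y^\star_v\Bigr)^{1/2}\le \sqrt r\,\Bigl(\sum_{v\in A}\lambda_v y^\star_v\Bigr)^{1/2}=\sqrt r\,\sqrt{a(A)}.
\]
The first factor is bounded by the optimality condition, because $z\ge0$ and is supported on $A$. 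The second factor is bounded using $z_v\le\lambda_v$. This proves $f(A)\le\sqrt r\,\sqrt{a(A)}$ and completes the theorem.
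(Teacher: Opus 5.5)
Your proof is correct and essentially matches the paper's: the lower bound comes from $y^\star\in P_f$ together with $\lambda_v\le f(A)$, and the upper bound applies the first-order condition $\sum_{v}z_v/y^\star_v\le r$ to a greedy vector $z$ supported on $A$, combined with Cauchy's inequality and $z_v\le\lambda_v$. Your small variations are all sound and slightly streamline the paper's version: using monotonicity of $f$ instead of the cut description and $U_A$, splitting Cauchy's inequality so that you divide only by $y^\star_v$, reading $z_v\le\lambda_v$ directly off $z\in P_f$, and explicitly justifying both the existence of an interior maximizer and the first-order condition.
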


\begin{proof}
We prove the two inequalities separately.

\medskip
\noindent
\textbf{Lower bound.}
We first prove
\[
a(A)\le f(A)^2
\qquad
\text{for every }A\subseteq R.
\]
It is enough to show that for every $U\subseteq R$,
\[
a(U)\le c(\delta(U))^2.
\]
Fix any $U\subseteq R$.
For every $v\in U$, the cut $\delta(U)$ separates $t$ from $v$.
Since $\lambda_v$ is the minimum cost of a cut separating $t$ from $v$, we have
\[
\lambda_v\le c(\delta(U)).
\]
Therefore,
\[
\begin{aligned}
a(U)
&=\sum_{v\in U}\lambda_v y^\star_v\\
&\le c(\delta(U))\sum_{v\in U}y^\star_v\\
&=c(\delta(U))y^\star(U).
\end{aligned}
\]
Since $y^\star\in P_f$, Lemma~\ref{lem:cut-poly} gives
\[
y^\star(U)\le c(\delta(U)).
\]
Hence
\[
a(U)\le c(\delta(U))^2.
\]

Now let $U_A$ be an optimal set in the definition of $f(A)$.
Thus
\[
A\subseteq U_A
\]
and
\[
c(\delta(U_A))=f(A).
\]
Using $a(A)\le a(U_A)$, we get
\[
a(A)\le a(U_A)\le c(\delta(U_A))^2=f(A)^2.
\]
Taking square roots gives
\[
\sqrt{a(A)}\le f(A).
\]

\medskip
\noindent
\textbf{Upper bound.}
We prove
\[
f(A)^2\le r\,a(A).
\]
The proportional-fair optimum satisfies the first-order optimality condition
\[
\sum_{v\in R}\frac{z_v-y^\star_v}{y^\star_v}\le 0
\qquad
\text{for every }z\in P_f.
\]
Equivalently,
\[
\sum_{v\in R}\frac{z_v}{y^\star_v}\le r
\qquad
\text{for every }z\in P_f.
\tag{1}
\label{eq:fairness}
\]

Fix a set $A\subseteq R$.
We need a vector $z^A\in P_f$ supported only on $A$ such that
\[
z^A(A)=f(A).
\]
Such a vector exists by the standard greedy characterization of polymatroids.
For completeness, we give the construction.
Order the elements of $A$ as
\[
A=\{v_1,\ldots,v_k\}.
\]
Let
\[
S_i=\{v_1,\ldots,v_i\}
\]
for $i=1,\ldots,k$, and let $S_0=\varnothing$.
Define
\[
z^A_{v_i}=f(S_i)-f(S_{i-1})
\qquad
\text{for }i=1,\ldots,k,
\]
and define
\[
z^A_v=0
\qquad
\text{for }v\notin A.
\]
By monotonicity, $z^A_v\ge 0$.
By submodularity, this greedy vector belongs to $P_f$.
Moreover,
\[
z^A(A)=\sum_{i=1}^k \bigl(f(S_i)-f(S_{i-1})\bigr)=f(A).
\]
For every $v\in A$,
\[
z^A_v\le f(\{v\})=\lambda_v,
\]
because each greedy marginal is at most the singleton value by submodularity and monotonicity.

Now apply Cauchy's inequality:
\[
\begin{aligned}
f(A)^2
&=\left(\sum_{v\in A}z^A_v\right)^2\\
&=
\left(
\sum_{v\in A}
\sqrt{\lambda_v y^\star_v}
\cdot
\frac{z^A_v}{\sqrt{\lambda_v y^\star_v}}
\right)^2\\
&\le
\left(\sum_{v\in A}\lambda_v y^\star_v\right)
\left(\sum_{v\in A}\frac{(z^A_v)^2}{\lambda_v y^\star_v}\right)\\
&=
a(A)
\left(\sum_{v\in A}\frac{(z^A_v)^2}{\lambda_v y^\star_v}\right).
\end{aligned}
\]
Since
\[
0\le z^A_v\le \lambda_v,
\]
we have
\[
\frac{(z^A_v)^2}{\lambda_v y^\star_v}
\le
\frac{z^A_v}{y^\star_v}.
\]
Therefore,
\[
f(A)^2
\le
a(A)\sum_{v\in A}\frac{z^A_v}{y^\star_v}.
\]
Using \eqref{eq:fairness} with $z=z^A$ gives
\[
\sum_{v\in A}\frac{z^A_v}{y^\star_v}
\le
\sum_{v\in R}\frac{z^A_v}{y^\star_v}
\le r.
\]
Thus
\[
f(A)^2\le r\,a(A).
\]
Taking square roots gives
\[
f(A)\le \sqrt r\,\sqrt{a(A)}.
\]
Combining the lower and upper bounds proves the theorem.
\end{proof}

\subsection{The approximation algorithm}

We now present the algorithm.

\begin{center}
\fbox{
\begin{minipage}{0.92\linewidth}
\textbf{Algorithm: Proportional-fair $O(\sqrt n)$ approximation for undirected three-terminal RPMEC}

\begin{enumerate}[leftmargin=1.5em]
\item Set $R=V\setminus\{t\}$.

\item For every $v\in R$, compute the singleton terminal connectivity
\[
\lambda_v=\lambda(t,v)=f(\{v\}).
\]

\item Compute a proportional-fair point
\[
y^\star\in
\arg\max_{y\in P_f,\ y_v>0\ \forall v}
\sum_{v\in R}\log y_v.
\]

\item Set
\[
a_v=\lambda_v y^\star_v
\qquad
\text{for every }v\in R.
\]

\item Find an $s_1$--$s_2$ path $P$ in $G-\{t\}$ minimizing
\[
a(V(P))=\sum_{v\in V(P)}a_v.
\]
This is a shortest-path computation with nonnegative vertex lengths.

\item Compute a minimum cut separating $t$ from every vertex of $V(P)$.
Equivalently, compute a set $U\subseteq R$ with
\[
V(P)\subseteq U
\]
minimizing $c(\delta(U))$.

\item Let $S$ be the connected component of $G[U]$ that contains $P$.
Output the cut $\delta(S)$, or equivalently the source side $S$.
\end{enumerate}
\end{minipage}
}
\end{center}

The vertex-weighted shortest path in Step 5 can be implemented by the standard vertex-splitting transformation or by assigning vertex costs when relaxing edges.
Since all $a_v$ are nonnegative, Dijkstra's algorithm can be used after a standard transformation.

\begin{theorem}
\label{thm:sqrtn-rpmec}
The algorithm returns a feasible RPMEC solution $S$ satisfying
\[
c(\delta(S))\le \sqrt{|V|-1}\cdot \OPT.
\]
Therefore, undirected three-terminal RPMEC admits a polynomial-time $O(\sqrt n)$-approximation.
\end{theorem}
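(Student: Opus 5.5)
The plan has two parts: feasibility of the output, then the cost bound obtained by chaining Theorem~\ref{thm:terminal-root-linear} with Lemma~\ref{lem:path-mincut}. I treat polynomial running time separately at the end.

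\emph{Feasibility.} The path $P$ from Step~5 lies in $G-\{t\}$, so $V(P)\subseteq R$. Step~6 produces $U\subseteq R$ with $V(P)\subseteq U$. The component $S$ of $G[U]$ containing $P$ is connected, contains $s_1$ and $s_2$, and excludes $t$, so it is a feasible source side. As in the proof of Lemma~\ref{lem:path-mincut}, $\delta(S)\subseteq\delta(U)$, hence
\[
c(\delta(S))\le c(\delta(U))=f(V(P)).
\]

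\emph{Cost.} By Lemma~\ref{lem:path-mincut}, fix an $s_1$--$s_2$ path $P^\star$ in $G-\{t\}$ with $f(V(P^\star))=\OPT$. Then I chain three inequalities. The upper bound of Theorem~\ref{thm:terminal-root-linear} gives $f(V(P))\le\sqrt r\,\sqrt{a(V(P))}$. Minimality of $P$ in Step~5 gives $a(V(P))\le a(V(P^\star))$. The lower bound of Theorem~\ref{thm:terminal-root-linear} gives $\sqrt{a(V(P^\star))}\le f(V(P^\star))=\OPT$. Combining,
\[
c(\delta(S))\le\sqrt r\,\sqrt{a(V(P^\star))}\le\sqrt{|V|-1}\cdot\OPT .
\]
This uses only that $\sqrt{\cdot}$ is monotone and that $a\ge 0$.

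\emph{Running time.} Steps~2 and~6 are max-flow computations: contract $V(P)$ into a super-source and find a minimum cut to $t$. Step~5 is Dijkstra on the split graph. Step~7 is a graph search. The main obstacle is Step~3, computing $y^\star$ exactly in polynomial time, since $P_f$ has exponentially many constraints. I would handle it with a separation oracle: given $y$, minimize $c(\delta(U))-y(U)$ over $U\subseteq R$, which is itself a min-cut problem. Attach each $v$ to an auxiliary source with capacity $y_v$ and keep $t$ as the sink; then $c(\delta(U))+y(R\setminus U)$ is minimized, and comparing the optimum with $y(R)$ decides membership. Ellipsoid or interior-point methods then maximize the concave objective to within a factor $(1+\varepsilon)$.

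The delicate point is how the approximate optimality enters. The argument only uses first-order condition~\eqref{eq:fairness}, so I would use the relaxed version
\[
\sum_{v\in R} z_v/y_v\le(1+\varepsilon)r \quad\text{for all } z\in P_f,
\]
which holds for a sufficiently accurate near-optimal $y\in P_f$. Rerunning the upper-bound proof with this condition gives $f(A)^2\le(1+\varepsilon)r\,a(A)$, while the lower bound needs only $y\in P_f$. This yields the ratio $\sqrt{(1+\varepsilon)(|V|-1)}$, and in either case an $O(\sqrt n)$-approximation. The zero-connectivity vertices are handled as in Remark~\ref{rem:zero-cases}.
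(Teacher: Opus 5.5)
Your proposal is correct and follows essentially the same route as the paper. You show feasibility via $\delta(S)\subseteq\delta(U)$, use the chain $c(\delta(S))\le f(V(P))\le\sqrt r\sqrt{a(V(P))}\le\sqrt r\sqrt{a(V(P^\star))}\le\sqrt r\,f(V(P^\star))\le\sqrt r\,\OPT$, and get polynomial time from the auxiliary-source min-cut separation oracle plus the relaxed fairness condition. The only cosmetic difference is that you take $P^\star$ attaining the minimum in Lemma~\ref{lem:path-mincut} rather than a path inside an optimal source side, which is equivalent.
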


\begin{proof}
Let $P^\star$ be an $s_1$--$s_2$ path contained in an optimal RPMEC source side.
Such a path exists because the optimal source side is connected and contains $s_1$ and $s_2$.

By the choice of $P$ in Step 5,
\[
a(V(P))\le a(V(P^\star)).
\]
Let $S$ be the connected source side returned by the algorithm.
Since the algorithm computes a minimum cut separating $t$ from $V(P)$, and then possibly restricts to the connected component containing $P$, we have
\[
c(\delta(S))\le f(V(P)).
\]
By Theorem~\ref{thm:terminal-root-linear},
\[
f(V(P))\le \sqrt r\,\sqrt{a(V(P))}.
\]
Therefore,
\[
\begin{aligned}
c(\delta(S))
&\le f(V(P))\\
&\le \sqrt r\,\sqrt{a(V(P))}\\
&\le \sqrt r\,\sqrt{a(V(P^\star))}\\
&\le \sqrt r\,f(V(P^\star)).
\end{aligned}
\]
By Lemma~\ref{lem:path-mincut},
\[
f(V(P^\star))\le \OPT.
\]
Therefore,
\[
c(\delta(S))\le \sqrt r\,\OPT.
\]
Since $r=|V|-1$, the approximation ratio is
\[
\sqrt{|V|-1}.
\]
\end{proof}

\subsection{Polynomial-time computability}

It remains to justify that the proportional-fair point can be computed in polynomial time.

By Lemma~\ref{lem:cut-poly},
\[
P_f
=
\{y\in\R_{\ge 0}^{R}:y(U)\le c(\delta(U))\ \forall U\subseteq R\}.
\]
Given a candidate vector $y$, feasibility can be checked by testing whether there exists $U\subseteq R$ such that
\[
y(U)>c(\delta(U)).
\]
Equivalently, one must compute
\[
\max_{U\subseteq R}\bigl(y(U)-c(\delta(U))\bigr).
\]

This separation problem reduces to one ordinary minimum cut.
Add a new source vertex $q$.
For every $v\in R$, add an edge $(q,v)$ of capacity $y_v$.
Keep all original edges of $G$ with their capacities $c_e$, and take $t$ as the sink.
For a $q$--$t$ cut whose source side is $\{q\}\cup U$, the cut capacity is
\[
y(R\setminus U)+c(\delta(U)).
\]
Since $y(R)$ is constant, minimizing this cut is equivalent to maximizing
\[
y(U)-c(\delta(U)).
\]
Thus if the minimum $q$--$t$ cut has value smaller than $y(R)$, the corresponding set $U$ violates the constraint
\[
y(U)\le c(\delta(U)).
\]
Otherwise, $y$ is feasible.

Therefore $P_f$ has a polynomial-time separation oracle.
The objective
\[
-\sum_{v\in R}\log y_v
\]
is convex on the positive orthant, so the proportional-fair point can be computed to polynomial precision using standard convex optimization with a separation oracle.

For an approximate version, it is enough to compute a feasible vector $y\in P_f$ satisfying
\[
\sum_{v\in R}\frac{z_v}{y_v}\le (1+\varepsilon)r
\qquad
\text{for every }z\in P_f.
\]
Then the same proof as Theorem~\ref{thm:terminal-root-linear} gives
\[
f(A)\le \sqrt{(1+\varepsilon)r}\sqrt{a(A)}
\]
for every $A\subseteq R$, and the final approximation ratio becomes
\[
\sqrt{(1+\varepsilon)(|V|-1)}.
\]

The fairness condition itself can be certified by linear optimization over the polymatroid.
Given
\[
d_v=\frac{1}{y_v},
\]
sort the vertices so that
\[
d_{v_1}\ge d_{v_2}\ge \cdots \ge d_{v_r}.
\]
Let
\[
S_i=\{v_1,\ldots,v_i\},
\qquad
S_0=\varnothing.
\]
The greedy algorithm for polymatroids gives
\[
\max_{z\in P_f} d\cdot z
=
\sum_{i=1}^{r} d_{v_i}\bigl(f(S_i)-f(S_{i-1})\bigr).
\]
Each value $f(S_i)$ is a minimum cut separating $t$ from $S_i$.
Therefore the fairness certificate can be checked using $r$ minimum-cut computations after sorting.

\begin{remark}[Zero-cost and zero-connectivity cases]
\label{rem:zero-cases}
The cleanest proof assumes $\lambda_v>0$ for all $v\in R$.
If $\lambda_v=0$, then $v$ is already separable from $t$ by a zero-cost cut.
Such vertices can be handled by preprocessing.
If $s_1$ and $s_2$ are connected inside a zero-cost-separated component excluding $t$, then the optimum value is $0$.
Otherwise, one may contract zero-cost connected components, restrict the proportional-fair construction to vertices with positive $t$-connectivity, or add an infinitesimal perturbation $\eta>0$ to edge costs and take the limit as $\eta\to 0$.
These standard treatments do not affect the asymptotic approximation guarantee.
\end{remark}

\section{Discussion}
\label{sec:discussion}

The proof can be viewed as a specialized everywhere approximation theorem for rooted terminal cut functions.
General nonnegative monotone submodular functions admit root-linear approximations with an $O(\sqrt n\log n)$ factor~\cite{Goemans2009ApproximatingSubmodularEverywhere}.
In contrast, the rooted terminal cut function
\[
f(A)=\lambda(t,A)
\]
has a graph-cut polymatroid representation:
\[
P_f
=
\{y\ge 0:y(U)\le c(\delta(U))\ \forall U\subseteq V\setminus\{t\}\}.
\]
This additional structure allows us to construct a proportional-fair root-linear surrogate with factor $O(\sqrt n)$.

The approximation ratio is also natural for this type of root-linear embedding.
Consider a star graph in which $t$ is connected independently to $r$ leaves by unit-capacity edges.
Then
\[
f(A)=|A|.
\]
Any root-linear approximation of the form $\sqrt{\sum_{v\in A}a_v}$ can differ from $f(A)$ by a factor of order $\sqrt r$ on the full set.
Conversely, in a graph with one unit-capacity bottleneck edge from $t$ into a component containing all $r$ vertices,
\[
f(A)=1
\qquad
\text{for every nonempty }A.
\]
Again, a $\sqrt r$ loss is unavoidable for simple root-linear surrogates.
Thus improving the RPMEC approximation below $O(\sqrt n)$ would likely require using more structure than a single root-linear embedding of the terminal cut function.

The result should be interpreted specifically for undirected edge cuts.
Directed RPMEC and node-cut RPMEC have different structural properties.
In particular, the proof relies on undirected cut submodularity and on the cut description of the rooted terminal cut polymatroid.
These ingredients do not transfer directly to directed graphs or node-separator formulations.

\section{Conclusion}
\label{sec:conclusion}

We gave a polynomial-time $O(\sqrt n)$-approximation algorithm for undirected three-terminal reachability-preserving minimum edge cut.
The algorithm is based on three ingredients.

First, the problem admits an exact path--mincut formulation:
\[
\OPT
=
\min_{P:s_1\leadsto s_2,\ t\notin V(P)} f(V(P)),
\]
where
\[
f(A)=\lambda(t,A)
\]
is the minimum cost of separating $t$ from every vertex of $A$.

Second, the rooted terminal cut function admits a constructive root-linear approximation.
If $y^\star$ is a proportional-fair point of the terminal cut polymatroid and
\[
a_v=\lambda(t,v)y^\star_v,
\]
then for every $A\subseteq V\setminus\{t\}$,
\[
\sqrt{\sum_{v\in A}a_v}
\le
f(A)
\le
\sqrt{|V|-1}
\sqrt{\sum_{v\in A}a_v}.
\]

Third, the root-linear surrogate turns the path problem into an ordinary shortest-path computation under vertex lengths $a_v$.
Cutting $t$ from the selected path and outputting the connected component containing it gives a feasible RPMEC solution.

The resulting approximation ratio is
\[
\sqrt{|V|-1},
\]
or
\[
\sqrt{(1+\varepsilon)(|V|-1)}
\]
when the proportional-fair point is computed approximately.

Several directions remain open.
The most important is whether undirected three-terminal RPMEC admits a constant-factor approximation on general graphs.
It is also natural to ask whether stronger guarantees are possible for planar graphs, bounded-treewidth graphs, minor-free graphs, or graphs with restricted cut structure.
Finally, extending the proportional-fair terminal-function method to directed or node-cut versions appears challenging and would require new ideas.

\bibliographystyle{plain}
\bibliography{main}

\end{document}